\def\P{\ensuremath{\mathfrak{P}}}
\def\F{\ensuremath{\mathbb{F}}}
\def\N{\ensuremath{\mathbb{N}}}
\def\C{\ensuremath{\mathcal{C}}}
\def\IP{\ensuremath{\mathbb{I}\mathbb{P}}}
\def\equaldef{\ensuremath{\triangleq}}
\def\O{\ensuremath{\mathcal{O}}}
\DeclareMathOperator{\RS}{RS}
 \newtheorem{definition}{Definition}
 \newtheorem{proposition}{Proposition}
 \newtheorem{lemma}{Lemma}
 \newtheorem{corollary}{Corollary}
 \newtheorem{example}{Example}
\newcommand{\parenth}[1]{\ensuremath{\left( #1 \right)}}
\newcommand{\mybrace}[1]{\ensuremath{\left\lbrace #1 \right\rbrace}}
\newcommand{\floor}[1]{\ensuremath{\left\lfloor #1 \right\rfloor}}
\title{Re-encoding reformulation and
  application to Welch-Berlekamp algorithm}
\author{Morgan Barbier\\
  ENSICAEN -- GREYC\\
  \texttt{morgan.barbier@ensicaen.fr}}
\begin{document}
\sloppypar
\maketitle

\begin{abstract}
  The main decoding algorithms for Reed-Solomon codes are based on
  a bivariate interpolation step, which is expensive in
  time complexity. Lot of interpolation methods were proposed in
  order to decrease the complexity of this procedure, but they stay still
  expensive. Then Koetter, Ma and Vardy proposed in 2010 a technique,
  called re-encoding, which allows
  to reduce the practical running time. However, this trick is only devoted for the
  Koetter interpolation algorithm. We propose a reformulation of the
  re-encoding for any interpolation methods. The assumption for this reformulation permits only to apply it to
  the Welch-Berlekamp algorithm.
\end{abstract}
\paragraph{Keywords:} Reed-Solomon codes, Welch-Berlekamp algorithm,
    Re-encoding.

\section{Introduction}
The algebraic decoding algorithms for the Reed-Solomon codes have been deeply
studied for the last decades, especially their decoding
algorithms. The Welch-Berlekamp decoding method provides a simple
approach to decode the Reed-Solomon codes up to the correction
capacity of the code \cite{WelchBerlekamp}. Then in 1997, Sudan generalizes
this approach to decode beyond this bound, which supplies the first
list decoding method for this family \cite{SudanLD}. Two years
latter, Guruswami and Sudan introduced another generalization of the
last method to correct even more errors, that is up to the Johnson's bound
\cite{GSAlgo}. In these three previous methods, a bivariate
interpolation step is needed, moreover their time complexities are given
by this procedure, which is expensive. Thus a lot of algorithms were
proposed to solve the bivariate interpolation as efficient as possible
\cite{Koetter,Alek,GaboritRuatta,AugotZeh,Trifonov,Beelen_Brander}. Even with these
computation improvements, the bivariate interpolation step stays
expensive.\\

In this way, Koetter, Ma, and Vardy introduced the notion of
re-encoding \cite{Reencoding}. This trick does not decrease the
asymptotic complexity, but leads to a considerable gain in
practice. The re-encoding can be split into three phases as following :
start to perform a translation by a codeword on the received word such
that $k$ positions become null, then  modify the intern
statement of the interpolation algorithm to have benefits of the null
positions, finally remove after the interpolation the translation did
at the first step. This technique implies to modify the intern state
of the interpolation algorithm in
relation with the null positions to speed up the running time of the
interpolation step. This adjustment of the intern state of interpolation algorithm is the main,
and maybe the only one, drawback of re-encoding.\\

In this article, we propose a new reformulation of the
re-encoding. This reformulation permits to use the re-encoding trick
with any bivariate interpolation algorithm without preliminary
modification. However to be generic is
under an assumption between the multiplicity and the $Y$-degree of the
interpolated polynomial. We apply this reformulation to the
Welch-Berlekamp algorithm and we observe that the gain is huge.\\

This article is organized in the following way: in Section~\ref{sec:decodingAlgo} we recall the main decoding algorithms
based on interpolation as Welch-Berlekamp, Sudan and
Guruswami-Sudan. Section~\ref{sec:reencoding} is devoted to recall
the principle of the original re-encoding and to introduce our
reformulation. Finally, in Section~\ref{sec:application} we apply
our revisited re-encoding to Welsh-Berlekamp algorithm and present the
performances.

\section{Interpolation based decoding algorithms}
\label{sec:decodingAlgo}
\subsection{Bivariate interpolation for the decoding}
Different decoding algorithms are based on the bivariate
interpolation. This step is the most expensive one, and the asymptotic
complexity is given by this bivariate interpolation. For example,
Welch-Berlekamp, Sudan and Guruswami-Sudan algorithms are based on
this procedure. Since the list decoding
algorithm for alternant codes \cite{ABC_LD_Goppa}, is also based on
interpolation step, we can also apply the re-encoding on it. In
this article we propose to deal only with the decoding algorithms for
Reed-Solomon codes, this is why we propose at first to recall the
definition of this class of codes.
\begin{definition}[Reed-Solomon codes]
  Let $\alpha_1,\dots,\alpha_n \in \F_q$ be $n$ distinct elements of
  $\F_q$. The {\em Reed-Solomon code} of dimension $k$ and support
  $(\alpha_i)$ is given by
  $$
  \RS[\alpha, k] = \mybrace{(P(\alpha_1),\dots,P(\alpha_n)) : P\in
    \F_q[X]_{<k}}.
  $$
\end{definition}
The three following algorithms are based on the same principle:
\begin{enumerate}
\item Compute a bivariate polynomial by interpolation of the received
  word $y$ and the support of the Reed-Solomon code $\alpha$.
\item Compute the univariate polynomial(s) $P$ which generated the
  codeword(s), as $Y$-root(s) of the bivariate polynomial.
\end{enumerate}
The differences between the three following algorithms are the
parameters of the bivariate interpolation, and it represent the most
expensive cost in time complexity of these methods. In the following,
we present quickly the main decoding algorithms for Reed-Solomon
codes, the interested reader can find more information in \cite{GuruswamiThesis} for example.

\subsubsection{Welch-Berlekamp}
The Welch-Berlekamp algorithm is an unambiguous decoding algorithm
devoted to the Reed-Solomon codes \cite{WelchBerlekamp}. Faster unambiguous decoding algorithms exist, as for example extended Euclide or Berlekamp-Massey algorithms, but they are devoted only to cyclic Reed-Solomon codes. Moreover, the
most famous list decoding algorithms are based on this method. This is
why, we propose to recall the main step of this algorithm.\\

This method is based on the computation of the bivariate polynomial by
interpolation satisfying
$$
(\IP_{WB}) \equaldef
\left\lbrace
\begin{array}{l}
   0 \ne Q(X,Y) \equaldef Q_0(X) + Y Q_1(X),\\
  Q(\alpha_i,y_i) = 0,\ \forall i \in \mybrace{1,\dots,n},\\
  \deg Q_0 \le n-t-1,\\
  \deg Q_1 \le n-t-k,
\end{array}
\right.
$$
where $t=\floor{\frac{n-k}{2}}$ is the correction capacity of the
Reed-Solomon code. Thus we obtain the pseudo-code in Algorithm~\ref{algo:WB}.
\begin{algorithm} 
\caption{\label{algo:WB}Welch-Berlekamp}
\begin{algorithmic}
\REQUIRE{The received word $y \in \F_q^n$ and the Reed-Solomon code $\C$.}
\ENSURE{The codeword $c \in \C$ if it exists such that $d(c,y)\le t=\floor{\frac{n-k}{2}}$, under the polynomial form.}
        
 \STATE
 \STATE $Q(X,Y) \gets$ Interpolation($\IP_{WB},\C$)
 \RETURN{$- \frac{Q_0(X)}{Q_1(X)}$.}
\end{algorithmic}
\end{algorithm}

\subsubsection{Sudan}
Sudan realized that if we wish to correct more errors, with the
Welch-Berlekamp algorithm, it could happen that there would exist different
$Y$-roots of the bivariate polynomial satisfying the condition
\cite{SudanLD}. So he proposed to modify the interpolation problem in
this way:
$$
(\IP_{S}) \equaldef
\left\lbrace
\begin{array}{l}
   0 \ne Q(X,Y) \equaldef \sum_{i=0}^{\ell}Q_i(X)Y^i,\\
  Q(\alpha_i,y_i) = 0,\ \forall i \in \mybrace{1,\dots,n},\\
  \deg Q_j \le n-T-1 - j(k-1),\ \forall
  j \in \mybrace{0,\dots,\ell}.
\end{array}
\right.
$$
%% We can summarize this method as in Algorithm~\ref{algo:Sudan}.
%% \begin{algorithm}[h]
%%   \caption{\label{algo:Sudan}Sudan}
%% \begin{algorithmic}
%%   \REQUIRE{The received word $y \in \F_q^n$ and the Reed-Solomon code
%%   $\C$.}
%%   \ENSURE{A list of codewords $c_i$ of  $\C$, such that
%%   $\forall i,\ d(y,c_i) \le T$.}

%%   \STATE 
%%   \STATE $Q(X,Y) \gets \text{Interpolation}(\IP_{S},\C)$
%%   \STATE $(P_1,\dots,P_\ell) \gets \text{Y-Roots(Q(X,Y))}$
%%   \STATE $Candidate \gets \mybrace{}$
%%   \FOR{$i \in \mybrace{1,\dots,\ell}$}
%%   \IF{$d\left(P_i(\alpha),y\right) \le T$}
%%     \STATE $Candidate \gets Candidate \cup \mybrace{P_i(\alpha)}$
%%   \ENDIF
%%   \ENDFOR
%%   \RETURN{$Candidate$.}
%% \end{algorithmic}
%% \end{algorithm}

\subsubsection{Guruswami-Sudan}
The Guruswami-Sudan algorithm introduces the notion of root with
multiplicity from Sudan algorithm \cite{GSAlgo}. Let us to recall the
definition of the Hasse derivative.
\begin{definition}[Hasse derivative]
  Let $Q(X,Y) \in \F_q[X,Y]$ be a bivariate polynomial and $a, b$ be
  two positive integers. The $(a,b)$-th
  {\em Hasse derivative} of $Q$ is
  $$
  Q^{[a,b]}(X,Y) \equaldef \sum_{i=a}^{\deg_X(Q)} \sum_{j=b}^{\deg_Y (Q)}
  {i \choose a} {j \choose b}q_{i,j}X^{i-a}Y^{j-b}.
  $$
\end{definition}
Thanks to the Hasse derivative, we can give the definition of the root
with multiplicity higher than one.
\begin{definition}[Root with multiplicity]
  Let $Q(X,Y) \in \F_q[X,Y]$ be a bivariate polynomial and $(\alpha,\beta)\in
  (\F_q)^2$ be a point. The point $(\alpha, \beta)$ is a {\em root with
  multiplicity} $s\in \N$ if and only if $s$ is the largest integer
  such that for all $i+j<s$
  $$
  Q^{[i,j]}(\alpha,\beta) = 0.
  $$ 
\end{definition}
Guruswami and Sudan noticed that it could happen that for some two
polynomials $P_{i_0}, P_{j_0}$, we have $y_{k_0} =
P_{i_0}(\alpha_{k_0})= P_{j_0}(\alpha_{k_0})$ and so the point
$(\alpha_{k_0},y_{k_0})$ is a root of $Q$ with multiplicity at least
2. So they proposed to add multiplicity constraint during the
bivariate interpolation step.
$$
(\IP_{GS}) \equaldef
\left\lbrace
\begin{array}{l}
   0 \ne Q(X,Y) \equaldef \sum_{i=0}^{\ell}Q_i(X)Y^i,\\
  Q(\alpha_i,y_i) = 0, \text{ with multiplicity }s,\ \forall i \in
  \mybrace{1,\dots,n},\\
  \deg(Q_j) \le s(n-T)-1 - j(k-1),\ \forall
  j \in \mybrace{0,\dots,\ell}.
\end{array}
\right.
$$
thus the pseudo-code of this method is given by:\\
\begin{algorithm}
  \caption{\label{algo:GS}Guruswami-Sudan}
\begin{algorithmic}
  \REQUIRE{The received word $y \in \F_q^n$ and the Reed-Solomon code
  $\C$.}
  \ENSURE{A list of codewords $c_i$ of $\C$, such that
  $\forall i,\ d(y,c_i) \le T$.}

  \STATE 

  \STATE $Q(X,Y) \gets \text{Interpolation}(\IP_{GS},\C)$
  \STATE $(P_1,\dots,P_\ell) \gets \text{Y-Roots(Q(X,Y))}$
  \STATE $Candidate \gets \mybrace{}$
  \FOR{$i \in \mybrace{1,\dots,\ell}$}

  \IF{$d\left(P_i(\alpha),y\right) \le T$}
    \STATE $Candidate \gets Candidate \cup \mybrace{P_i(\alpha)}$
  \ENDIF
  \ENDFOR
  \RETURN{$Candidate$.\\}
\end{algorithmic}
\end{algorithm}

\subsection{Original re-encoding}
\begin{definition}[Interpolation problem]
Let $\P \equaldef \mybrace{(\alpha_1,y_1), \dots, (\alpha_n,y_n)}
\subset \left( \F_q \times \F_q\right)^n$. The {\em interpolation
  problem} with multiplicity $s$ associated to $\P$, $\IP(\P,s)$,
consists in finding $Q(X,Y)$ such that the points $(\alpha_i,y_i)$ are a
root of $Q(X,Y)$ with multiplicity at least $s$.
\end{definition}

\begin{lemma}
  \label{lem:mult}
  Let $s$ be an integer, $\alpha, \beta \in \F_q$ and $Q(X,Y) \in \F_q[X,Y]$ a
  bivariate polynomial such that the point $(\alpha, \beta)$ is a root of $Q$
  with multiplicity $s$. Then for all univariate polynomial $P$ such
  that $P(\alpha)=\beta$, we have
  $$
  (X-\alpha)^s \mid Q(X,P(X)).
  $$
\end{lemma}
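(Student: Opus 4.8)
The plan is to regard $f(X) \equaldef Q(X,P(X))$ as a univariate polynomial in $X$ and to show that $X=\alpha$ is a zero of $f$ of order at least $s$, which is exactly the claimed divisibility $(X-\alpha)^s \mid Q(X,P(X))$. The engine of the argument is the Taylor-type expansion of $Q$ around the point $(\alpha,\beta)$ that the Hasse derivatives provide.

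First I would record the expansion
$$
Q(X,Y) = \sum_{a,b} Q^{[a,b]}(\alpha,\beta)\,(X-\alpha)^a (Y-\beta)^b.
$$
This follows directly from the definition of the Hasse derivative: writing $Q = \sum_{i,j} q_{i,j} X^i Y^j$ and expanding $Q(\alpha+U,\beta+V)$ by the binomial theorem in each variable, the coefficient of $U^a V^b$ is precisely $Q^{[a,b]}(\alpha,\beta)$; relabelling $U = X-\alpha$ and $V = Y-\beta$ gives the identity. This is the step I expect to be the technical heart, although it amounts to a double binomial expansion together with a reindexing of the summation.

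Next I would invoke the multiplicity hypothesis. By the definition of a root of multiplicity $s$, every coefficient $Q^{[a,b]}(\alpha,\beta)$ with $a+b<s$ vanishes, so the expansion collapses to
$$
Q(X,Y) = \sum_{a+b \ge s} Q^{[a,b]}(\alpha,\beta)\,(X-\alpha)^a (Y-\beta)^b.
$$

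Finally I would substitute $Y = P(X)$. Since $P(\alpha)=\beta$, the factor $(X-\alpha)$ divides $P(X)-\beta$, and hence $(X-\alpha)^b \mid (P(X)-\beta)^b$ for every $b$. Each surviving term is therefore divisible by $(X-\alpha)^a (X-\alpha)^b = (X-\alpha)^{a+b}$, and since $a+b \ge s$ throughout the summation range, every term — and thus the full sum $Q(X,P(X))$ — is divisible by $(X-\alpha)^s$, which is the desired conclusion. No genuine obstacle arises beyond keeping the binomial bookkeeping straight; in particular the characteristic of $\F_q$ plays no role, because the Hasse derivative is designed precisely so that this Taylor expansion remains valid in all characteristics.
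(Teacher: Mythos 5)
Your proof is correct: the Hasse--Taylor expansion of $Q$ around $(\alpha,\beta)$, the vanishing of all coefficients with $a+b<s$, and the observation that $(X-\alpha)\mid P(X)-\beta$ together give the divisibility, and the argument is valid in any characteristic. The paper itself only cites Guruswami's thesis for this lemma, and the cited proof is essentially the same argument you give, so there is nothing to add.
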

\begin{proof}
  See \cite[Lemma 6.6, p. 103]{GuruswamiThesis}.
\end{proof}
We can generalize the previous lemma for all interpolation points,
taking care the multiplicity.
\begin{proposition}
  Let $\P \subset (\F_q \times \F_q)^n$ and $s$ be a positive
  integer. The polynomial $Q(X,Y)$ is a solution of $\IP(\P,s)$ if and
  only if 
  $$
  \forall b \in \mybrace{0,\dots,s-1},\
  \prod_{i=1}^{n}(X-\alpha_i)^{s-b} \mid Q^{[b]}(X,L(X)),
  $$
  where $Q^{[b]}(X,Y) = Q^{[0,b]}(X,Y)$ is the $b$-th Hasse derivative in
  $Y$, and $L(X)$ is the Lagrange polynomial of $\P$, that is for all
  $i \in \mybrace{1,\dots,n}, L(\alpha_i)=y_i$.
\end{proposition}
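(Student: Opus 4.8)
The plan is to reduce the statement to a local analysis at a single point $(\alpha_i,y_i)$ and then recombine over $i$ using that the $\alpha_i$ are distinct. The two tools are the Hasse--Taylor expansion $Q(X,Y)=\sum_{a,b}Q^{[a,b]}(\alpha,\beta)\,(X-\alpha)^a(Y-\beta)^b$, which exhibits the Hasse derivatives as the coefficients of $Q$ re-expanded around $(\alpha,\beta)$, together with the composition rule $(Q^{[0,b]})^{[a',b']}=\binom{b+b'}{b}Q^{[a',b+b']}$, which lets me expand each $Y$-derivative $Q^{[b]}=Q^{[0,b]}$ in the same way. Fixing $i$, writing $u=X-\alpha_i$ and $v=L(X)-y_i$, and noting $u\mid v$ since $L(\alpha_i)=y_i$, substitution of $Y=L(X)$ into the expansion of $Q^{[b]}$ around $(\alpha_i,y_i)$ gives, with $g_{a,b}\equaldef Q^{[a,b]}(\alpha_i,y_i)$,
$$
Q^{[b]}(X,L(X))=\sum_{a'\ge 0,\,b'\ge 0}\binom{b+b'}{b}\,g_{a',\,b+b'}\;u^{a'}v^{b'},
$$
in which the term indexed by $(a',b')$ is divisible by $u^{a'+b'}$ because $u\mid v$.

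For the direct implication, suppose $(\alpha_i,y_i)$ is a root of multiplicity at least $s$, i.e. $g_{a,b}=0$ whenever $a+b<s$. Then every surviving term above satisfies $a'+(b+b')\ge s$, hence $a'+b'\ge s-b$, so $u^{s-b}\mid Q^{[b]}(X,L(X))$. Since the $\alpha_i$ are distinct the polynomials $(X-\alpha_i)^{s-b}$ are pairwise coprime, and a polynomial divisible by each is divisible by their product; this yields $\prod_{i=1}^{n}(X-\alpha_i)^{s-b}\mid Q^{[b]}(X,L(X))$ for every $b\in\mybrace{0,\dots,s-1}$.

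The converse is the heart of the argument. From the product divisibility I first read off, for each fixed $i$ and each $b$, the single-point condition $u^{s-b}\mid Q^{[b]}(X,L(X))$, and then prove $g_{a,b}=0$ for all $a+b<s$ by \emph{downward} induction on $b$ from $s-1$ to $0$. The mechanism is triangularity in the displayed sum: the terms with $b'=0$ contribute exactly $\sum_{a'}g_{a',b}\,u^{a'}$ with coefficient $\binom{b}{b}=1$, whereas every term with $b'\ge 1$ involves a derivative $g_{a',\,b+b'}$ of strictly larger $Y$-index, which by the induction hypothesis vanishes unless $a'+b'\ge s-b$, in which case $u^{a'}v^{b'}$ is already a multiple of $u^{s-b}$. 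Hence $Q^{[b]}(X,L(X))\equiv\sum_{a'}g_{a',b}\,u^{a'}\pmod{u^{s-b}}$, and $u^{s-b}\mid Q^{[b]}(X,L(X))$ forces $g_{a',b}=0$ for $a'=0,\dots,s-b-1$, i.e. for $a'+b<s$, closing the induction.

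The step I expect to be the main obstacle is precisely this converse induction: one must organise the unknowns $g_{a,b}$ with $a+b<s$ into a triangular system and make sure nothing is lost to the characteristic of $\F_q$. The reassuring point is that the coefficients actually extracted at each stage are the diagonal ones $\binom{b}{b}=1$, which never vanish, so the argument holds over any $\F_q$; the possibly-zero binomials $\binom{b+b'}{b}$ with $b'\ge1$ only multiply terms that have already been pushed to order at least $s-b$ and are absorbed by the induction hypothesis. The only other point needing care is the coprimality recombination over the distinct $\alpha_i$, which is what upgrades the per-point divisibilities to the single product condition in the statement.
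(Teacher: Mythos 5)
Your proof is correct, but note that the paper does not actually prove this proposition: its ``proof'' is a pointer to Proposition~1 of the cited work of Augot and Zeh, so there is no in-paper argument to compare against, and what you have written is a genuinely self-contained derivation. Both pillars of your argument check out. The composition rule $(Q^{[0,b]})^{[a',b']}=\binom{b+b'}{b}Q^{[a',b+b']}$ is the correct identity for Hasse derivatives (the binomial identity $\binom{m+b}{b}\binom{m}{b'}=\binom{b+b'}{b}\binom{m+b}{b+b'}$ is what makes it work over any characteristic), and the Hasse--Taylor expansion is exactly the right device since ordinary Taylor expansion fails over $\F_q$. The forward direction is routine once $u\mid v$ is observed, and the coprimality recombination over the distinct $\alpha_i$ is standard. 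The converse is where care is needed and your triangular downward induction handles it properly: at level $b$ the new unknowns $g_{a',b}$ enter only through the $b'=0$ terms with unit coefficient $\binom{b}{b}=1$, while every $b'\ge 1$ term either vanishes by the inductive hypothesis (when $b<b+b'<s$) or satisfies $a'+b'\ge b'\ge s-b$ trivially (when $b+b'\ge s$), so it is absorbed into the modulus $u^{s-b}$; in particular the possibly vanishing binomials $\binom{b+b'}{b}$ never multiply anything you need to be nonzero. This is essentially the same local-expansion technique one finds in the reference the paper defers to, so your proof does not take a structurally different route from the literature, but it does supply the argument the paper omits.
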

\begin{proof}
  See \cite[Proposition 1]{AugotZeh}.
\end{proof}
Let $\P \subset (\F_q \times \F_q)^n$ and $L_k(X)$ be the Lagrange
polynomial on $k$ elements of $\P$, without lost in generality,
assuming the $k$ first positions. Let
$$
\P_n = \{(\alpha_i,\underbrace{y_i-L_k(\alpha_i)}_{=r_i}) :
  \forall i \in \mybrace{1,\dots,n}\}.
$$
Then for all $i \in \mybrace{1,\dots,k},\ r_i=0$.
\begin{proposition}
  \label{prop:solwithzeros}
  Let $\P \subset (\F_q \times \F_q)^n$ and $s$ be a positive
  integer. The polynomial $Q(X,Y)$ is a solution of $\IP(\P,s)$ if and
  only if $Q(X,Y+L_k(X))$ is a solution of $\IP(\P_n,s)$.
\end{proposition}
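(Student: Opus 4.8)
The plan is to exploit the fact that the substitution map $\phi : Q(X,Y) \mapsto Q(X,Y+L_k(X))$ is an $\F_q[X]$-algebra automorphism of $\F_q[X,Y]$, whose inverse is the shear $Q(X,Y)\mapsto Q(X,Y-L_k(X))$. Being a bijection, $\phi$ satisfies $Q\neq 0$ if and only if $\phi(Q)\neq 0$, so the only thing that needs checking is that $\phi$ matches up the root conditions of $\IP(\P,s)$ with those of $\IP(\P_n,s)$. I would isolate the core claim as follows: for any $c(X)\in\F_q[X]$ and any point $(\alpha,\beta)$, the shear $Q\mapsto Q(X,Y+c(X))$ sends a root of multiplicity at least $s$ at $(\alpha,\beta+c(\alpha))$ to a root of multiplicity at least $s$ at $(\alpha,\beta)$, i.e.\ a shear never decreases the multiplicity of a root. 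Applying this to $c=L_k$ gives one implication of the proposition, and applying it to $c=-L_k$ (which is again a shear of the same shape, and recovers $Q$ from $\tilde Q\equaldef Q(X,Y+L_k(X))$) gives the converse; together they yield the stated equivalence, point by point.

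First I would record the action on the interpolation points. Writing $r_i = y_i - L_k(\alpha_i)$ as in the definition of $\P_n$, direct evaluation gives $\tilde Q(\alpha_i,r_i) = Q(\alpha_i,\, r_i + L_k(\alpha_i)) = Q(\alpha_i,y_i)$, so $\phi$ transports the point $(\alpha_i,y_i)$ of $\P$ onto the point $(\alpha_i,r_i)$ of $\P_n$. Hence the proposition reduces to showing that $(\alpha_i,y_i)$ is a root of $Q$ of multiplicity at least $s$ if and only if $(\alpha_i,r_i)$ is a root of $\tilde Q$ of multiplicity at least $s$, for every $i\in\mybrace{1,\dots,n}$.

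The heart of the argument is a local computation. Setting $u = X-\alpha_i$ and $v = Y-r_i$, and writing $L_k(\alpha_i+u) = L_k(\alpha_i) + M(u)$ with $M(0)=0$, one obtains
$$
\tilde Q(\alpha_i+u,\, r_i+v) \;=\; Q\big(\alpha_i+u,\; y_i + v + M(u)\big).
$$
Expanding $Q$ around $(\alpha_i,y_i)$ as $\sum_{a,b} Q^{[a,b]}(\alpha_i,y_i)\, u^a (v+M(u))^b$, where the $Q^{[a,b]}(\alpha_i,y_i)$ are exactly the coefficients of the shifted expansion (this is the content of the Hasse derivative, and the multiplicity condition says these coefficients vanish for $a+b<s$), I would expand each $(v+M(u))^b = \sum_c \binom{b}{c} v^c M(u)^{b-c}$. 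Since $M(u)$ has no constant term, $M(u)^{b-c}$ has order at least $b-c$ in $u$, so each resulting monomial $u^a v^c M(u)^{b-c}$ has total degree in $(u,v)$ at least $a + c + (b-c) = a+b$. Thus the substitution never lowers total degree: if all coefficients $Q^{[a,b]}(\alpha_i,y_i)$ with $a+b<s$ vanish, then every term of total degree less than $s$ in the expansion of $\tilde Q$ around $(\alpha_i,r_i)$ vanishes too, so $(\alpha_i,r_i)$ is a root of $\tilde Q$ of multiplicity at least $s$.

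The only delicate point is this last degree bookkeeping: one must verify that replacing $v$ by $v+M(u)$, with $M$ divisible by $u$, can only send monomials to higher total degree and can never manufacture a surviving term of total degree below $s$ out of coefficients that were assumed to vanish. Once this valuation estimate is established, the claimed shear inequality holds; applying it both to $\phi$ (with $c=L_k$) and to $\phi^{-1}$ (with $c=-L_k$) forces equality of multiplicities at every point, and the equivalence of $\IP(\P,s)$ and $\IP(\P_n,s)$ follows at once.
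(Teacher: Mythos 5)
Your proposal is correct, but it does something the paper does not: the paper's ``proof'' of Proposition~\ref{prop:solwithzeros} is only a pointer to Theorem~3 of the Koetter--Ma--Vardy re-encoding paper, whereas you supply a complete, self-contained argument. Your route --- viewing $Q\mapsto Q(X,Y+L_k(X))$ as an $\F_q[X]$-algebra automorphism whose inverse is the opposite shear, and then checking via the Hasse--Taylor expansion $Q(\alpha+u,\beta'+w)=\sum_{a,b}Q^{[a,b]}(\alpha,\beta')u^aw^b$ with $w=v+M(u)$, $u\mid M(u)$, that the substitution can only raise total $(u,v)$-degree --- is exactly the right mechanism, and the one delicate step you flag (that $u^av^cM(u)^{b-c}$ has total degree at least $a+b$, so no term of degree below $s$ can survive) is handled correctly; it works in any characteristic since Hasse derivatives and the integer binomial theorem are used throughout. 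Combined with bijectivity of the shear to transfer the $Q\neq 0$ condition and the symmetry obtained by running the same estimate for $c=-L_k$, this gives the stated equivalence point by point. What your approach buys is independence from the cited reference and a proof of the more general fact that any shear $Q\mapsto Q(X,Y+c(X))$ preserves root multiplicities at correspondingly translated points; what it costs is only length relative to the one-line citation. One cosmetic remark: where you write ``Expanding $Q$ around $(\alpha_i,y_i)$,'' it would be cleaner to say you expand around $(\alpha_i,y_i)=(\alpha_i,r_i+L_k(\alpha_i))$, i.e.\ around the point whose Hasse derivatives are assumed to vanish, since that is the point at which the hypothesis is used.
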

\begin{proof}
  See \cite[Theorem 3]{Reencoding}.
\end{proof}

\section{Revisited re-encoding}
\label{sec:reencoding}
\subsection{Re-encoding and interpolation algorithm}
A problem occurs with the re-encoding process: we have to modify
the interpolation algorithm in order to take care of the $k$ first
interpolation points to speed up the computation. So for each
interpolation algorithm we have to adapt the initialization step to
have the total benefits of the re-encoding step. As far we know, only the
Koetter interpolation algorithm was modified to perform it. Although lot
of interpolation algorithms were proposed, we can use for the moment,
the re-encoding trick only with the Koetter interpolation algorithm.

\subsection{Revisited re-encoding}
Let $L_n(X)$ be the interpolation Lagrange polynomial of the set
$\P_n$. Thus $\forall i \in \mybrace{1, \dots,n},\ L_n(\alpha_i) = r_i$,
and $\deg(L_n) \le n-1$. Since for all $i \in \mybrace{1,\dots,k}\ r_i
=0$, it exists the polynomial $L_{n-k}(X)$ such that
$$
L_{n-k}(X) = \frac{L_{n}(X)}{\prod_{i=1}^{k}(X-\alpha_i)}.
$$
Thanks to the previous remark on the Lagrange polynomials, we deduce
the following proposition which is the key ingredient of our
reformulation.
\begin{proposition}
  \label{prop:remove_zeros}
   Let $\P_{n-k}$ be the point set without zeros defined as
$\P_{n-k} \equaldef \mybrace{(\alpha_{k+1},L_{n-k}(\alpha_{k+1})),
  \cdots, (\alpha_n,L_{n-k}(\alpha_n))} \subset (\F_q \times
\F_q)^{n-k}$,
$$
R(X,Y) = \sum_{j=0}^{\deg_Y(R)} R_j(X) Y^j,
$$
be a bivariate polynomial over $\F_q$ and $s$ be a positive integer
such that $s \ge \deg_Y R$. The polynomial $R$ is a solution of
$\IP(\P_{n-k},s)$ if and only if 
$$
Q(X,Y) = \sum_{j=0}^{\deg_Y(R)} \left(R_j(X)
  \prod_{i=1}^{k}(X-\alpha_i)^{s-j}\right)Y^j,
$$ 
is a solution of $\IP(\P_n,s)$.
\end{proposition}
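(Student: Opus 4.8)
The plan is to make the relation between $R$ and $Q$ completely explicit and then to translate both membership conditions into the divisibility criterion for $\IP(\cdot,s)$ recalled above (the Augot--Zeh characterization). Writing $\pi(X) \equaldef \prod_{i=1}^{k}(X-\alpha_i)$, the defining formula for $Q$ amounts to $Q_j(X) = R_j(X)\pi(X)^{s-j}$, so that
$$
Q(X,Y) = \pi(X)^{s}\,R\parenth{X,\ Y/\pi(X)}.
$$
The hypothesis $s \ge \deg_Y R$ is exactly what guarantees that every exponent $s-j$ occurring here is non-negative, hence that $Q$ is a genuine element of $\F_q[X,Y]$ with $\deg_Y Q = \deg_Y R$; this is the single place where the assumption enters.

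First I would record two facts about the Lagrange data. Since $L_n(\alpha_i)=r_i=0$ for $i\le k$, we have $L_n(X)=\pi(X)\,L_{n-k}(X)$; and because $L_{n-k}(\alpha_i)=L_n(\alpha_i)/\pi(\alpha_i)$ for $i>k$ matches the $Y$-coordinates of $\P_{n-k}$ while $\deg L_{n-k}\le n-k-1$, the polynomial $L_{n-k}$ is precisely the Lagrange polynomial of $\P_{n-k}$. Both solvability statements can then be rephrased through the divisibility proposition: $R$ solves $\IP(\P_{n-k},s)$ iff $\prod_{i=k+1}^{n}(X-\alpha_i)^{s-b}\mid R^{[b]}(X,L_{n-k}(X))$ for every $b\in\mybrace{0,\dots,s-1}$, and $Q$ solves $\IP(\P_n,s)$ iff $\prod_{i=1}^{n}(X-\alpha_i)^{s-b}\mid Q^{[b]}(X,L_n(X))$ for every such $b$.

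The heart of the argument is the identity
$$
Q^{[b]}(X,L_n(X)) = \pi(X)^{s-b}\,R^{[b]}(X,L_{n-k}(X)),
$$
which I would establish by a direct computation. Expanding the $b$-th Hasse derivative in $Y$ gives $Q^{[b]}(X,Y)=\sum_{j\ge b}\binom{j}{b}R_j(X)\pi(X)^{s-j}Y^{j-b}$; substituting $Y=L_n(X)=\pi(X)L_{n-k}(X)$ turns each term's $\pi$-contribution into $\pi(X)^{s-j}\pi(X)^{j-b}=\pi(X)^{s-b}$, which is independent of $j$ and so factors out, leaving $\pi(X)^{s-b}R^{[b]}(X,L_{n-k}(X))$. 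Correctly matching the powers of $\pi$ against the Hasse derivative is the step to get right; everything else is formal (for $b>\deg_Y R$ both sides vanish, so the identity holds trivially there too).

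With the identity in hand the equivalence is immediate. Factoring $\prod_{i=1}^{n}(X-\alpha_i)^{s-b}=\pi(X)^{s-b}\prod_{i=k+1}^{n}(X-\alpha_i)^{s-b}$, the divisibility condition for $Q$ reads $\pi(X)^{s-b}\prod_{i=k+1}^{n}(X-\alpha_i)^{s-b}\mid \pi(X)^{s-b}R^{[b]}(X,L_{n-k}(X))$. Since $b\le s-1$ forces $s-b\ge 1$ and $\F_q[X]$ is an integral domain, the common factor $\pi(X)^{s-b}$ cancels, leaving exactly the condition characterizing solutions of $\IP(\P_{n-k},s)$. As this equivalence holds for each $b\in\mybrace{0,\dots,s-1}$, the two solvability statements coincide, which proves the proposition.
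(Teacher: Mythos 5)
Your proof is correct and follows essentially the same route as the paper: both reduce the statement to the Augot--Zeh divisibility criterion and rest on the identity $Q^{[b]}(X,L_n(X)) = \parenth{\prod_{i=1}^{k}(X-\alpha_i)}^{s-b} R^{[b]}(X,L_{n-k}(X))$, with the hypothesis $s \ge \deg_Y R$ used only to keep the exponents $s-j$ non-negative. If anything, your write-up is slightly more careful, since you explicitly justify the cancellation that makes the chain of divisibilities an equivalence rather than a one-way implication.
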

\begin{proof}
Since $R$ is a solution of $\IP(\P_{n-k},s)$, then for all $b \in \mybrace{0,\dots,s-1}$
{\small
\begin{eqnarray*}
  \prod_{i=k+1}^{n}(X-\alpha_i)^{s-b} & \mid & R^{[b]}(X,L_{n-k}(X))\\
  \prod_{i=k+1}^{n}(X-\alpha_i)^{s-b} & \mid & \sum_{j=b}^{\deg_Y(R)} {j
    \choose b} R_j(X) (L_{n-k}(X))^{j-b}\\
  \prod_{i=1}^{n}(X-\alpha_i)^{s-b} & \mid & \sum_{j=b}^{\deg_Y(R)} {j
    \choose b }\left(R_j(X) \prod_{i=1}^{k}(X-\alpha_i)^{s-j}\right)
  (L_n(X))^{j-b}\\
  \prod_{i=1}^{n}(X-\alpha_i)^{s-b} & \mid & Q(X,L_{n}(X)).
\end{eqnarray*}
}
Since $s \ge \deg_Y R, s-j\ge 0$, the statement is hold.
\end{proof}
\begin{corollary}
  It exists a polynomial $R(X, Y)$ solution of $\IP(\P_{n-k},s)$ with $s \ge \deg_Y R$ if and only if exists $Q(X,Y)$ a solution of $\IP(\P_n,s)$.
\end{corollary}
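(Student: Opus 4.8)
The statement is the existential counterpart of Proposition~\ref{prop:remove_zeros}, which already exhibits an explicit correspondence between the coefficients of $R$ and those of $Q$. The forward implication is immediate: if $R$ is a solution of $\IP(\P_{n-k},s)$ with $s \ge \deg_Y R$, then Proposition~\ref{prop:remove_zeros} produces directly the polynomial $Q(X,Y) = \sum_{j=0}^{\deg_Y R} \parenth{R_j(X)\prod_{i=1}^{k}(X-\alpha_i)^{s-j}} Y^j$, which solves $\IP(\P_n,s)$. All the content therefore lies in the converse, and I would prove it by inverting the coefficient map of the proposition.

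Given a solution $Q(X,Y) = \sum_{j} Q_j(X) Y^j$ of $\IP(\P_n,s)$, the natural candidate is $R_j(X) = Q_j(X)/\prod_{i=1}^{k}(X-\alpha_i)^{s-j}$. The only point to verify is that each $R_j$ is an honest polynomial, that is, that $\prod_{i=1}^{k}(X-\alpha_i)^{s-j} \mid Q_j(X)$ for every $j$ with $0 \le j < s$ (for $j \ge s$ the exponent $s-j$ is nonpositive and nothing has to be divided). This is exactly the place where the $k$ re-encoded points $(\alpha_i, 0)$, $i \in \mybrace{1,\dots,k}$, are used.

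The divisibility follows from the fact that each such $(\alpha_i,0)$ is a root of $Q$ with multiplicity $s$. The key computation is that specialising the Hasse derivative at $Y=0$ annihilates every monomial of $Y$-degree exceeding $b$, leaving $Q^{[a,b]}(X,0)$ equal to the $a$-th Hasse derivative in $X$ of the coefficient $Q_b(X)$. Consequently the multiplicity-$s$ condition $Q^{[a,b]}(\alpha_i,0)=0$ for all $a+b<s$ says precisely that the first $s-b$ Hasse derivatives of $Q_b$ vanish at $\alpha_i$, which is equivalent to $(X-\alpha_i)^{s-b}\mid Q_b(X)$. Since the $\alpha_i$ are distinct, these factors are pairwise coprime, and I obtain $\prod_{i=1}^{k}(X-\alpha_i)^{s-b}\mid Q_b(X)$ for each $b<s$, exactly the divisibility needed to make every $R_j$ a polynomial.

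It remains to check the degree condition and to close the equivalence. The divisions create no new $Y$-monomials, so $\deg_Y R = \deg_Y Q$, and under the standing assumption $s \ge \deg_Y Q$ all the exponents $s-j$ invoked above are nonnegative, so the hypothesis of Proposition~\ref{prop:remove_zeros} is met for $R$. Reading that proposition in the reverse direction then certifies that $R$ is a solution of $\IP(\P_{n-k},s)$, which completes the converse. I expect the identification of $Q^{[a,b]}(X,0)$ with the univariate Hasse derivative of $Q_b$, and the resulting translation of ``root of multiplicity $s$ at $(\alpha_i,0)$'' into divisibility by $\prod_{i=1}^{k}(X-\alpha_i)^{s-j}$, to be the only genuinely delicate step; the rest is bookkeeping already packaged in Proposition~\ref{prop:remove_zeros}.
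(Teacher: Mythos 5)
Your forward direction coincides with the paper's: it is Proposition~\ref{prop:remove_zeros} read left to right. Where you genuinely depart from the paper is in the converse. The paper disposes of it in one line, by observing that the divisibility chain in the proof of Proposition~\ref{prop:remove_zeros} is a chain of equivalences; but that chain only relates a pair $(R,Q)$ already linked by the coefficient map $Q_j = R_j\prod_{i=1}^{k}(X-\alpha_i)^{s-j}$, so by itself it says nothing about a solution $Q$ of $\IP(\P_n,s)$ handed to you in isolation. Your Hasse-derivative computation --- $Q^{[a,b]}(X,0)$ is the $a$-th Hasse derivative of $Q_b(X)$, hence the multiplicity-$s$ roots $(\alpha_i,0)$ for $i\in\mybrace{1,\dots,k}$ force $(X-\alpha_i)^{s-b}\mid Q_b$ for all $b<s$, and coprimality of the factors gives $\prod_{i=1}^{k}(X-\alpha_i)^{s-b}\mid Q_b$ --- is exactly the missing ingredient showing that every solution of $\IP(\P_n,s)$ is (in $Y$-degrees below $s$) of the form produced by the proposition. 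That part is correct and is a real strengthening of the paper's one-line argument.

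The flaw is the ``standing assumption $s \ge \deg_Y Q$'': there is no such assumption. The corollary bounds only $\deg_Y R$ on the left-hand side and places no degree restriction on the witness $Q$ on the right-hand side. Since your divisions do not change the $Y$-degree, a witness $Q$ with $\deg_Y Q > s$ --- and such solutions of $\IP(\P_n,s)$ do exist, e.g.\ multiply any solution by a suitable power of $Y$ --- yields an $R$ with $\deg_Y R > s$, which is not a witness for the left-hand existential. You must either add the symmetric hypothesis $s \ge \deg_Y Q$ (surely the intended reading, implicit in the paper, and automatic in the Welch-Berlekamp application where $s = 1 = \deg_Y Q$), or dispose of the remaining case separately; note that as literally stated $\IP(\P_{n-k},s)$ always admits the $Y$-free solution $\prod_{i=k+1}^{n}(X-\alpha_i)^{s}$, which makes the existential trivially true but empties the corollary of content. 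As written, your converse does not cover all the $Q$ quantified over by the statement.
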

\begin{proof}
  Since the first and last line of the proof of Proposition~\ref{prop:remove_zeros} are equivalent, the statement is hold.
\end{proof}
Thanks to the Proposition~\ref{prop:solwithzeros}, we can compute a solution
of the interpolation problem on $\P =
\mybrace{(\alpha_1,y_1),\dots,(\alpha_n,y_n)}$ from
$\P_n=\left\lbrace(\alpha_1,0),\dots,(\alpha_k,0),\right.$ $\left.
  (\alpha_{k+1},y_{k+1}-L_k(\alpha_{k+1})),\dots,
  (\alpha_n,y_{n}-L_k(\alpha_n))\right\rbrace$. Our revisited
re-encoding could be seen as a decoding on the puncturing code. Since
the Reed-Solomon code are MDS, the punctured code has the same
dimension and it is also a Reed-Solomon code. We could imagine to
reiterate the re-encoding process taking $\P_{n-k}=\P'$, then the
decoding will make on the multi puncturing code and the correction
radius will decrease.\\

The Proposition~\ref{prop:remove_zeros} is under the assumption that the
multiplicity $s$ is greater or equal than the $Y$-degree of $R$, a
solution of $\IP(\P_{n-k},s)$. Which is not a problem, because a
solution of $\IP(\P_{n-k},s+k)$, for all positive integer $k$, is also
a solution of $\IP(\P_{n-k},s)$. However, this artificial augmentation
of the multiplicity could increase also the $X$-degree of the
solution, and so introduces some issue for the interpolation problem
related to the decoding. This is why we deal only with the
Welch-Berlekamp algorithm in Section~\ref{sec:application}.

\section{Application to the Welch-Berlekamp algorithm}
\label{sec:application}
\subsection{Straightforward application}
In the Welch-Berlekamp decoding context, use the principle of the
revisited re-encoding, is straightforward. Indeed, the only
condition in order to make practical our re-encoding is that
multiplicity $s$ is greater or equal than the $Y$-degree of the
bivariate polynomial to compute. In the Welch-Berlekamp context the
multiplicity $s$ is exactly equal to the $Y$-degree, that is 1. 
Let $S(X, Y) = S_0(X) + Y S_1(X) \in \F_q[X, Y]$ be a solution of
$\IP(\P_{n-k}, 1)$, then $R$ given by the
Proposition~\ref{prop:remove_zeros}:
\begin{eqnarray*}
  R(X, Y) & = & S_0(X) \prod_{i=1}^{k} (X-\alpha_i) + Y S_1(X),
\end{eqnarray*}
is a solution of the $\IP(\P_{n},1)$. Keeping the same notations and
using the Proposition~\ref{prop:solwithzeros}, we deduce directly a solution
of the interpolation problem $\IP(\P, 1)$ from the simpler one
$\IP(\P_{n-k},1)$. Let $Q(X, Y) \in \F_q[X,Y]$ such that 
{\small
\begin{eqnarray*}
  Q(X, Y) & = & R(X, Y + L_k(X))\\
  & = & \parenth{S_0(X) \prod_{i=1}^{k} (X-\alpha_i) + L_k(X) S_1(X)}
  + Y S_1.
\end{eqnarray*}
}
In order to satisfy the interpolation conditions of the
Welch-Berlekamp algorithm, we must have: $\deg S_1 \le n-t -k$ and
$\deg S_0 \le n-t-1-k$. It can be rewritten as
$$
\forall j\in \mybrace{0,1},\ \deg S_j \le n-t-k-1 -j(-1).
$$
We deduce that the weighted-degree changes during the bivariate
interpolation. Using the example described below,
we have to interpolate $n-k$ points with the weighted-degree equal to
-1, instead of interpolating $n$ points with the weighted-degree
$k-1$, without modifying the intern state of the interpolation
algorithm. As already noticed in \cite{Reencoding}, this is not a monomial order since $Y<1$.
Let us illustrate our claim with a toy example.
\begin{example}
  Let $\F_8$, $\alpha$ be a 7-th primitive root of the unity such that
  $\alpha^3 + \alpha + 1 = 0$, $\C$ be the Reed-Solomon code
  $\RS[(\alpha^i)_{i=0,\dots,6}, 2]$ over $\F_8$. Hence the
  Welch-Berlekamp method can correct up to $\floor{\frac{d-1}{2}} = 2$
  errors. Let $P(X) = \alpha^6 X + \alpha^5 \in \F_8[X]$ be the message
  under its polynomial form. The associated codeword is then $(\alpha,
  \alpha^4, \alpha^6, \alpha^3, \alpha^2, 1, 0)$. Assume there are 2
  errors occur during the transmission in the first and 5-th positions,
  the received word is $(\alpha^5, \alpha^4, \alpha^6, \alpha^3,
  \alpha^3, 1, 0)$.\\

  Now let us perform the revisited re-encoding. Using the previous
  notations, the Lagrange interpolation polynomial of the
  original interpolation points set $\P_n = \mybrace{(1,\alpha^5),
    (\alpha,\alpha^4),
    (\alpha^2,\alpha^6),
    (\alpha^3,\alpha^3),
    (\alpha^4,\alpha^3),
    (\alpha^5, 1),
    (\alpha^6, 0)
  }$ is 
  $$
  L_n = X^6 + \alpha^4 X^4 + \alpha^2 X^3 + \alpha^3 X^2 +
  \alpha^2 X + \alpha^2.
  $$
  Assume that we want to vanish the 2 first points, then the Lagrange
  interpolation polynomial on these points is $L_k = \alpha^4 X + 1$,
  and the quotient
  $$
  L_{n-k} = \frac{L_n(X)}{(X-1)(X-\alpha)} = X^4 + \alpha^3 X^3 + X^2
  + X + \alpha.
  $$
  Then the new interpolation points set is 
  $$
    \P_{n-k} = \mybrace{ (\alpha^2, \alpha^4),
    ( \alpha^3, \alpha^2),
    ( \alpha^4, 0 ),
    ( \alpha^5, \alpha^6 ),
    ( \alpha^6, \alpha )}.
  $$
  Hence the bivariate polynomial which interpolates $\P_{n-k}$ with
  multiplicity $s=1$ and weighted-degree -1 is
  $$
  S(X, Y) = Y(\underbrace{\alpha^6 X^2 + \alpha^4 X + \alpha^3}_{ =
      S_1}) + \underbrace{\alpha^2X + \alpha^6}_{=S_0}.
  $$
  We deduce the polynomial which interpolates the $
  \P_{n} = \mybrace{ (1, 0),
      (\alpha, 0 ),
      (\alpha^2, \alpha^4),
    ( \alpha^3, \alpha^2),
    ( \alpha^4, 0 ),
    ( \alpha^5, \alpha^6 ),
    ( \alpha^6, \alpha )},
  $
  is
  {\small
  \begin{eqnarray*}
    R(X, Y) & = & Y S_1 (X) + (X-1)(X-\alpha) S_0(X)\\
     & = & Y(\alpha^6X^2 + \alpha^4 X + \alpha^3) + \alpha^2 X^3 +
     \alpha X^2 + \alpha^5 X + 1.
  \end{eqnarray*}
  }
  To finish the reconstruction step of the interpolation, we compute 
  \begin{eqnarray*}
    Q(X,Y) &= & R(X, Y+L_n(X))\\
    & = & Y(\underbrace{\alpha^6 X^2 + \alpha^4 X + \alpha^3}_{=Q_1})
    + \underbrace{\alpha^5 X^3 + \alpha^6 X^2 + \alpha}_{=Q_0}.
  \end{eqnarray*}
  In the Welch-Berlekamp algorithm the $Y$-root search is
  trivial. Indeed, it
  consists only in the division of the $Q_0$ by $Q_1$
  \begin{eqnarray*}
    P  = & - \frac{Q_0}{Q_1} & = \alpha^6 X + \alpha^5,
  \end{eqnarray*}
  which is exactly the sent message under the polynomial form.
\end{example}

\subsection{Performance}
In this section, we propose to compare the Welch-Berlekamp running times based on different interpolation methods. In Table~\ref{tab:compareS}, we compare the Welch-Berlekamp algorithm based on solving a linear system with no re-encoding, and with our revisited re-encoding. In Table~\ref{tab:compareK}, we compare Welch-Berlekamp algorithm based on Koetter interpolation: without re-encoding, with original re-encoding and with our revisited re-encoding. These experimentations were done on a 2.13GHz Intel(R) Xeon(R). The timings presented in Table~\ref{tab:compareS} and Table~\ref{tab:compareK} are in seconds unit for 100 iterations for each set of parameters. 

\subsubsection{Linear systems for interpolation}
From \cite{gathen}, the asymptotic complexity of solving linear systems is $\O(n^{2.3727})$. This complexity could be discussed in practical context, for our implementation we use the black box solver of \texttt{MAGMA} \cite{magma}. In Table~\ref{tab:compareS}, we compare the running time of Welch-Berlekamp algorithm based on solving linear system equations without re-encoding and with our revisited re-encoding method. We can see that using the revisited re-encoding provides an important gain especially as the code dimension $k$ is large.
\begin{table}[h]
  \centering
 \begin{tabular}{|c|c||c|c|}
    \hline
    $m$ & $\C$ & usual & revisited re-encoding\\
    \hline
    \hline
      4
 & $\RS[ 15 ,  8  ]$ &   0.070   &   0.010  \\
    \cline{2-4}
 & $\RS[ 15 ,  10  ]$ &   0.040   &   0.030  \\
    \cline{2-4}
 & $\RS[ 15 ,  12  ]$ &   0.050   &   0.010  \\
    \cline{2-4}
 & $\RS[ 15 ,  14  ]$ &   0.050   &   0.000  \\
    \hline
    \hline
      5
 & $\RS[ 31 ,  16  ]$ &   0.150   &   0.080  \\
    \cline{2-4}
 & $\RS[ 31 ,  20  ]$ &   0.150   &   0.040  \\
    \cline{2-4}
 & $\RS[ 31 ,  24  ]$ &   0.150   &   0.040  \\
    \cline{2-4}
 & $\RS[ 31 ,  28  ]$ &   0.140   &   0.030  \\
    \hline
    \hline
      6
 & $\RS[ 63 ,  32  ]$ &   0.530   &   0.210  \\
    \cline{2-4}
 & $\RS[ 63 ,  40  ]$ &   0.520   &   0.150  \\
    \cline{2-4}
 & $\RS[ 63 ,  48  ]$ &   0.490   &   0.120  \\
    \cline{2-4}
 & $\RS[ 63 ,  56  ]$ &   0.500   &   0.080  \\
    \hline
    \hline
      7
 & $\RS[ 127 ,  64  ]$ &   2.100   &   0.730  \\
    \cline{2-4}
 & $\RS[ 127 ,  80  ]$ &   2.050   &   0.500  \\
    \cline{2-4}
 & $\RS[ 127 ,  96  ]$ &   1.970   &   0.320  \\
    \cline{2-4}
 & $\RS[ 127 ,  112  ]$ &   1.890   &   0.230  \\
    \hline
    \hline
      8
 & $\RS[ 255 ,  128  ]$ &   9.100   &   2.830  \\
    \cline{2-4}
 & $\RS[ 255 ,  160  ]$ &   8.870   &   1.790  \\
    \cline{2-4}
 & $\RS[ 255 ,  192  ]$ &   8.600   &   1.080  \\
    \cline{2-4}
 & $\RS[ 255 ,  224  ]$ &   8.370   &   0.700  \\
    \hline
  \end{tabular}
  \caption{\label{tab:compareS}
  Comparison between the Welch-Berlekamp using linear system for interpolation without re-encoding and our revisited re-encoding. The shown timings are in second unit for 100 computations.}
\end{table}

\subsubsection{Koetter algorithm for interpolation}
Since it is the main goal of this article, we assume that we
cannot modify the intern state of the interpolation algorithm.
The asymptotic complexity of Koetter algorithm is $\O(L N^2)$; where
$L$ is the $Y$-degree of the bivariate polynomial $Q$ and $N$ the
number of the linear constraints given by the interpolation
conditions. Then the complexity of the standard Welch-Berlekamp
algorithm is $\O(n^2)$. While the complexity of the original
re-encoding is $\O((n-k)^2)$ with inter state modifications and $\O(n^2)$ without, our revisited re-encoding exhibits an asymptotic complexity of $\O((n-k)^2)$ without modification of interpolation method. We propose to compare 3 decoding methods:
Welch-Berlekamp algorithm without re-encoding, Welch-Berlekamp
algorithm with the original re-encoding, and finally the
Welch-Berlekamp with our revisited re-encoding. These 3 decoding
methods were implemented with the same interpolation function, without
modification or particular parameterization. As in the solving linear system equations case, we remark that the revisited re-encoding is both faster than the usual and original re-encoding methods. The gain is important especially as the code dimension $k$ is.
\begin{table}[h]
  \centering
  {\scriptsize
  \begin{tabular}{|c|c||c|c|c|}
    \hline
    $m$ & $\C$ & usual & original re-encoding & revisited
    re-encoding\\
    \hline
    \hline
    4 & $\RS[ 15 ,  8 ]$ &  0.270  &  0.230  &  0.090  \\
    \cline{2-5}
    & $\RS[ 15 ,  10 ]$ &  0.260  &  0.210  &  0.080  \\
    \cline{2-5}
    & $\RS[ 15 ,  12 ]$ &  0.250  &  0.180  &  0.050  \\
    \cline{2-5}
    & $\RS[ 15 ,  14 ]$ &  0.230  &  0.180  &  0.050  \\
    \hline
    \hline
    5 & $\RS[ 31 ,  16 ]$ &  0.930  &  0.760  &  0.250  \\
    \cline{2-5}
    & $\RS[ 31 ,  20 ]$ &  0.820  &  0.710  &  0.220  \\
    \cline{2-5}
    & $\RS[ 31 ,  24 ]$ &  0.820  &  0.660  &  0.100  \\
    \cline{2-5}
    & $\RS[ 31 ,  28 ]$ &  0.840  &  0.550  &  0.080  \\
    \hline
    \hline
    6 & $\RS[ 63 ,  32 ]$ &  3.440  &  3.130  &  1.070  \\
    \cline{2-5}
    & $\RS[ 63 ,  40 ]$ &  3.480  &  2.890  &  0.650  \\
    \cline{2-5}
    & $\RS[ 63 ,  48 ]$ &  3.460  &  2.580  &  0.390  \\
    \cline{2-5}
    & $\RS[ 63 ,  56 ]$ &  3.350  &  2.300  &  0.260  \\
    \hline
    \hline
    7 & $\RS[ 127 ,  64 ]$ &  16.760  &  15.220  &  4.440  \\
    \cline{2-5}
    & $\RS[ 127 ,  80 ]$ &  16.840  &  14.160  &  2.570  \\
    \cline{2-5}
    & $\RS[ 127 ,  96 ]$ &  17.400  &  13.160  &  1.260  \\
    \cline{2-5}
    & $\RS[ 127 ,  112 ]$ &  17.780  &  11.600  &  0.560  \\
    \hline
    \hline
    8 & $\RS[ 255 ,  128 ]$ &  100.780  &  92.070  &  21.150  \\
    \cline{2-5}
    & $\RS[ 255 ,  160 ]$ &  104.100  &  88.200  &  11.300  \\
    \cline{2-5}
    & $\RS[ 255 ,  192 ]$ &  109.840  &  83.910  &  5.110  \\
    \cline{2-5}
    & $\RS[ 255 ,  224 ]$ &  113.550  &  74.440  &  1.950  \\
    \hline
  \end{tabular}
}
  \caption{\label{tab:compareK}
  Comparison between the Welch-Berlekamp using Koetter interpolation without re-encoding, with
  original re-encoding and our revisited re-encoding. The shown timings are
  in second unit for 100 computations.}
\end{table}

\section{Conclusion and perspective}
We introduce a new reformulation of the re-encoding process which
allows to make it usable with any interpolation algorithm. However
the assumption that the multiplicity $s$ is smaller than the
$Y$-degree is the price to be generic. We perform different tests
with the Welch-Berlekamp algorithm showing that our reformulation
provides a very important gain. A very interesting perspective will be to relieve the assumption to apply this reformulation to list-decoding algorithms.

\bibliographystyle{alpha}
\bibliography{biblio}
\end{document}